\definecolor{LightGreen}{rgb}{0.9, 1, 0.9}
\definecolor{LightRed}{rgb}{1, 0.9, 0.9}
\definecolor{LightBlue}{rgb}{0.9, 0.9, 1}
\definecolor{DarkGreen}{rgb}{0, 0.9, 0}
\definecolor{DarkRed}{rgb}{0.9, 0, 0}
\definecolor{DarkYellow}{rgb}{1, 0.9, 0}
\definecolor{Highlight}{rgb}{0.618, 0, 0}
\definecolor{NoteHighlight}{rgb}{1, 1, 0.5}
\definecolor{DoneHighlight}{rgb}{1, 0.8, 0.8}
\definecolor{DarkGray}{gray}{0.5}
\definecolor{LightGray}{gray}{0.9}
\definecolor{Gray}{gray}{0.75}
\newcommand{\pone}{\textsf{MaxGroups}\xspace}
\newcommand{\ptwo}{\textsf{MinOutJoin}\xspace}
\providecommand{\abs}[1]{\left\lvert#1\right\rvert}
\providecommand{\expr}[1]{\left(#1\right)}
\providecommand{\tuple}[1]{\left(#1\right)}
\providecommand{\set}[1]{\left\{#1\right\}}
\providecommand{\condset}[2]{\left\{#1\,\middle|\,#2\right\}}
\providecommand{\ceil}[1]{\left\lceil#1\right\rceil}
\providecommand{\highlight}[1]{\textcolor{Highlight}{#1}}
\providecommand{\NULL}{\setlength{\fboxsep}{1pt}\colorbox{LightGray}{\texttt{NULL}}}
\newcommand{\topk}{{top}}
\newcommand{\cover}{{cover}}
\newcommand{\join}{{join}}
\begin{document}
\begin{sloppy}

\title{Preference-driven Similarity Join}

\thanks{
This work is supported in part by the NSERC Discovery Grant program, the Canada Research Chair program, the NSERC Strategic Grant program, and the SFU President's Research Start-up Grant. All opinions, findings, conclusions and recommendations in this paper are those of the authors and do not necessarily reflect the views of the funding agencies.
\\
}

\author{Chuancong Gao}
\affiliation{Simon Fraser University}
\email{cga31@sfu.ca}
\author{Jiannan Wang}
\affiliation{Simon Fraser University}
\email{jnwang@sfu.ca}
\author{Jian Pei}
\affiliation{Simon Fraser University}
\affiliation{Huawei Technologies Co., Ltd.}
\email{jpei@cs.sfu.ca}
\author{Rui Li}
\affiliation{Yahoo! Labs}
\email{ruililab@yahoo-inc.com}
\author{Yi Chang}
\affiliation{Huawei Research America}
\email{yichang@acm.org}

\begin{abstract}
Similarity join, which can find similar objects (e.g., products, names, addresses) across different sources, is powerful in dealing with variety in big data, especially web data. Threshold-driven similarity join, which has been extensively studied in the past, assumes that a user is able to specify a similarity threshold, and then focuses on how to efficiently return the object pairs whose similarities pass the threshold. We argue that the assumption about a well set similarity threshold may not be valid for two reasons. The optimal thresholds for different similarity join tasks may vary a lot. Moreover, the end-to-end time spent on similarity join is likely to be dominated by a back-and-forth threshold-tuning process.

In response, we propose preference-driven similarity join. The key idea is to provide several \emph{result set preferences}, rather than a range of thresholds, for a user to choose from. Intuitively, a result set preference can be considered as an objective function to capture a user's preference on a similarity join result.
Once a preference is chosen, we automatically compute the similarity join result optimizing the preference objective.
As the proof of concept, we devise two useful preferences and propose a novel preference-driven similarity join framework coupled with effective optimization techniques. Our approaches are evaluated on four real-world web datasets from a diverse range of application scenarios. The experiments show that preference-driven similarity join can achieve high-quality results without a tedious threshold-tuning process.
\end{abstract}

\maketitle

\section{Introduction}
\label{sec:intro}

A key characteristic of big data is \emph{variety}. Data (especially web data) often comes from different sources and the value of data can only be extracted by integrating various sources together. Similarity join, which finds similar objects (e.g., products, people, locations) across different sources, is a powerful tool for tackling the challenge.

For example, suppose a data scientist collects a set of restaurants from Groupon.com and would like to know which restaurants are highly rated on Yelp.com. Since a restaurant may have different representations in the two data sources (e.g., \textsf{``10 East Main Street"} vs. \textsf{``10 E Main St., \#2"}), she can use similarity join to find these similar restaurant pairs and integrate the two data sources together.

\emph{Threshold-driven similarity join} has been extensively studied in the past~\cite{DBLP:conf/vldb/ArasuGK06,DBLP:conf/icde/ChaudhuriGK06,DBLP:conf/www/BayardoMS07,DBLP:conf/www/XiaoWLY08,DBLP:journals/pvldb/XiaoWL08,DBLP:conf/icde/XiaoWLS09,DBLP:journals/pvldb/WangLF10,DBLP:journals/pvldb/LiDWF11,DBLP:conf/sigmod/WangLF12,DBLP:journals/pvldb/JiangLFL14,DBLP:conf/sigmod/VernicaCL10,DBLP:journals/pvldb/MetwallyF12,DBLP:conf/icde/DengLHWF14,DBLP:conf/sigmod/LuLWLW13,DBLP:journals/pvldb/BourosGM12}.
To use it, one has to go through three steps: (a) selecting a similarity function (e.g., Jaccard), (b) selecting a threshold (e.g., $0.8$), and (c) running a similarity join algorithm to find all object pairs whose similarities are at least $0.8$. The existing studies are mainly focused on Step (c). However, both Steps (a) and (b) deeply implicate humans in the loop, which can be orders of magnitude slower than conducting the actual similarity join.

One may argue that, in reality, humans are able to {quickly} select an appropriate similarity function and a corresponding threshold for a given similarity join task. For choosing similarity function, this may be true because humans can understand the semantics of each similarity function and choose the one that meets their needs.

However, selecting an appropriate threshold may be far from easy. It is extremely difficult for humans to figure out the effect of different thresholds on result quality. Choosing a good threshold depends on not only the specified similarity function but also the underlying data.
We conduct an empirical analysis on the optimal thresholds for a diverse range of similarity join tasks, where the optimal thresholds maximize $F_1$-scores~{\cite{DBLP:journals/pvldb/WangLYF11}}. Table~\ref{tab:exp/optimal} shows the results (details of the experiment are in Section~\ref{sec:exp}). We find that the optimal thresholds for the tasks are quite different. Even for the same similarity function, the optimal thresholds may still vary a lot. For example, the optimal threshold of a record-linkage task on the \textsf{Restaurants} dataset is $0.6$, which differs a lot from the optimal threshold of $0.34$ on the \textsf{Scholar-DBLP} dataset using the same similarity function.

\begin{table}[t]
\small
\centering

\caption{Example of optimal thresholds w.r.t.\ various tasks.}
\label{tab:exp/optimal}

\begin{tabular}{l|l|c:c}
Dataset & Task & Optimal Threshold & Similarity \\
\hline
\textsf{Wiki Editors} & Spell checking & $0.625$ & Jaccard \\
\textsf{Restaurants} & Record linkage & $0.6$ & Jaccard \\
\textsf{Scholar-DBLP} & Record linkage & $0.34$ & Jaccard \\
\textsf{Wiki Links} & Entity matching & $0.9574$ & Tversky \\
\end{tabular}
\end{table}

To solve this problem, one idea may be to label some data and then use the labeled data as the {ground truth of matching pairs} to tune the threshold. However, human labeling is error-prone and time-consuming, which significantly increases the (end-to-end) time of data integration or cleaning using similarity join.

In this paper, we tackle this problem from a different angle -- \emph{can we achieve high-quality similarity join results without requiring humans to label any data or specifying a similarity threshold?} 
{
Our key insight is inspired by the concept of preference in areas like economics, which is an ordering of different alternatives (results)~{\cite{Scholar:preference}}. Taking Yelp.com as an example, the restaurants can be presented in different ways such as by distance, price, or rating. The different ordering may meet different search intents. A user needs to evaluate her query intent and choose the most suitable ranking accordingly.
}
{Similarly,} when performing a similarity join, we seek to provide a number of \emph{result set preferences} for a user to select from. Intuitively, a result set preference can be thought of as an objective function to capture how much a user likes a similarity join result. Once a particular preference is chosen, we automatically tune the threshold such that the preference objective function is maximized, and then we return the corresponding similarity join result.
We call this new similarity join model \emph{preference-driven similarity join}. Compared to the traditional threshold-driven similarity join, this new model does not need any labeled data. 

{
As a proof of concept, our paper proposes two preferences from different yet complementary perspectives.
The first preference \mbox{\pone} groups the joined pairs where each group is considered as an entity across two data sources, and returns the join result having the largest number of groups. The second preference \mbox{\ptwo} balances between matching highly similar pairs and joining many possibly matching pairs, and favors the join result minimizing the outer-join size.
}
According to our experiments on various datasets with ground-truth, the preference-driven approach can achieve optimal or nearly optimal $F_1$-scores on different tasks without knowing anything about the optimal thresholds.

Given a result set preference, a challenge is how to develop an efficient algorithm for preference-driven similarity join. This problem is more challenging than the traditional threshold-driven similarity join because it involves one additional step: finding the best threshold such that a preference is maximized. 
The brute-force method needs to compute the similarity values for all the pairs. It is highly inefficient even for datasets of moderate size. We solve this problem by developing a novel similarity join framework along with effective optimization techniques. The experimental results show that the proposed framework achieves several orders of magnitude speedup over the brute-force method.

The rest of the paper is organized as follows.
In Section~\ref{sec:problem}, we formally define the problem of preference-driven similarity join. In Section~\ref{sec:preference}, we design two result set preferences from different perspectives. In Section~\ref{sec:framework}, we propose a preference-driven similarity join framework, and develop efficient algorithm for set-based similarity functions. In Section~\ref{sec:exp}, we evaluate our approach on four real-world web datasets from a diverse range of applications. The results suggest that preference-driven similarity join is a promising idea to tackle the threshold selection problem for similarity join, and verify that our method is effective and efficient. We review related work in Section~\ref{sec:related} and conclude the paper in Section~\ref{sec:conclusion}.

\section{Problem Definition}
\label{sec:problem}

Let $R$ and $S$ be two sets of objects, $\mathbb{C^+}$ denote the ground-truth that is the set of pairs in $R \times S$ that should be joined/matched{, and $\mathbb{C^-}$ denote the remaining pairs, i.e., $\mathbb{C^-} = (R \times S)\setminus \mathbb{C^+}$.
We call the pairs in $\mathbb{C^+}$ and $\mathbb{C^-}$ matching pairs and non-matching pairs, respectively.
{In general, $\mathbb{C^+}$ and $\mathbb{C^-}$ are assumed unknown.}
}
Figure~\ref{fig:toy/table} shows a toy running example of $R$ and $S$, as well as the ground-truth $\mathbb{C^+}$.
\begin{figure}[t]
\centering
\subfloat{
\begin{minipage}{0.4\linewidth}
\includegraphics[width=\linewidth]{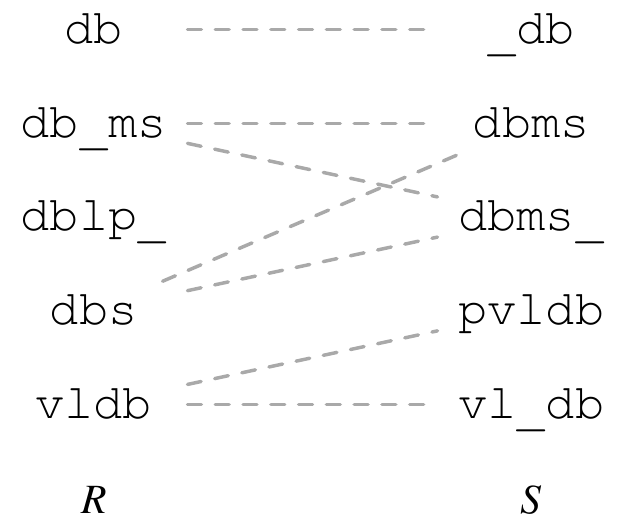}
\end{minipage}
}
\quad$\Longrightarrow$\quad
\small
\captionsetup[subfigure]{labelformat=empty}
\setcounter{subfigure}{0}
\subfloat[$(r, s)$]{
\begin{tabular}{c}
$\tuple{\texttt{\textcolor{Highlight}{db\_ms}}, \texttt{\textcolor{Highlight}{dbms\_}}}$ \\
\hline
$\tuple{\texttt{\textcolor{Highlight}{db}\_\highlight{ms}}, \texttt{\textcolor{Highlight}{dbms}\ }}$ \\
$\tuple{\texttt{\textcolor{Highlight}{vldb}\ }, \texttt{p\textcolor{Highlight}{vldb}}}$ \\
$\tuple{\texttt{\textcolor{Highlight}{vldb}\ }, \texttt{\textcolor{Highlight}{vl}\_\highlight{db}}}$ \\
\hline
$\tuple{\texttt{\textcolor{Highlight}{dbs}\ \ }, \texttt{\textcolor{Highlight}{db}m\highlight{s}\ }}$ \\
\hline
$\tuple{\texttt{\textcolor{Highlight}{db}\ \ \ }, \texttt{\_\textcolor{Highlight}{db}\ \ }}$ \\
$\tuple{\texttt{\textcolor{Highlight}{dblp}\_}, \texttt{\textcolor{Highlight}{p}v\textcolor{Highlight}{ldb}}}$ \\
$\tuple{\texttt{\textcolor{Highlight}{dbl}p\highlight{\_}}, \texttt{v\textcolor{Highlight}{l\_db}}}$ \\
\hline
$\tuple{\texttt{\textcolor{Highlight}{db\_}ms}, \texttt{\textcolor{Highlight}{\_db}\ \ }}$ \\
$\tuple{\texttt{\textcolor{Highlight}{db}lp\highlight{\_}}, \texttt{\textcolor{Highlight}{\_db}\ \ }}$ \\
$\tuple{\texttt{\textcolor{Highlight}{dbs}\ \ }, \texttt{\textcolor{Highlight}{db}m\highlight{s}\_}}$ \\
\hline
$\tuple{\texttt{\textcolor{Highlight}{db}\ \ \ }, \texttt{\textcolor{Highlight}{db}ms\ }}$ \\
$\tuple{\texttt{\textcolor{Highlight}{db}s\ \ }, \texttt{\_\textcolor{Highlight}{db}\ \ }}$ \\
\end{tabular}
}
~
\subfloat[$sim(r, s)$]{
\begin{tabular}{c}
\multirow{1}{*}{$1$} \\
\hline
\multirow{3}{*}{${0.8}$} \\
\\
\\
\hline
${0.75}$ \\
\hline
\multirow{3}{*}{$0.667$} \\
 \\
 \\
\hline
\multirow{3}{*}{$0.6$} \\
 \\
 \\
\hline
\multirow{2}{*}{$0.5$} \\
 \\
\end{tabular}
}
\caption{Example of $join(\theta)$, where $\theta = 0.5$. The optimal threshold is $0.75$. The edges represent the ground-truth $\mathbb{C^+}$.}\label{fig:toy/table}
\end{figure}

Let $sim: (r, s) \in R \times S \rightarrow [0, 1]$ denote a similarity function.
\begin{definition}[Threshold-driven similarity join]
Given a similarity function $sim$ and a threshold $\theta$, return all the object pairs $(r, s)$ whose similarity values are at least $\theta$, that is,
\[join(R, S, sim, \theta) = \condset{(r, s) \in R \times S}{sim(r, s) \ge \theta}
\rlap{\qed}
\]
\end{definition}
If $R$, $S$, and $sim$ are clear from the context, we write $join(\theta)$ for the sake of brevity.

We can regard $join(\theta)$ as a classifier, where the pairs returned by the function are the ones classified as positive, and the rest pairs $\expr{R \times S} \setminus \expr{join(\theta)}$ classified as negative, that is, not-matching.
Figure~\ref{fig:toy/table} shows an example of $join(0.5)$ using Jaccard similarity. Here, for simplicity we tokenize a string into a set of characters. For example, $jaccard(\texttt{\highlight{db}lp\highlight{\_}}, \texttt{\highlight{\_db}}) = \frac{\abs{r\cap s}}{\abs{r\cup s}} = \frac{\abs{\{\texttt{\highlight{\_}}, \texttt{\highlight{d}}, \texttt{\highlight{b}}\}}}{\abs{\{\texttt{\highlight{\_}}, \texttt{\highlight{d}}, \texttt{\highlight{b}}, \texttt{l}, \texttt{p}\}}} = \frac{3}{5} = 0.6$. 

Similarity join seeks to find a threshold that leads to the best result quality. Theoretically, there are an infinite number of thresholds to choose from. {However, we only need to consider a finite subset of the possible thresholds, which is the set of the similarity values of all object pairs in $R\times S$, i.e., $\condset{sim(r,s)}{r \in R, s \in S}$}, because, for any threshold not in the finite set, there is always a smaller threshold in the finite set having the same join result. {For example, threshold $0.9$ is not in the finite set in the example in Figure~\ref{fig:toy/table} but $join(0.9) = join(0.8)$.}

Threshold tuning is time consuming and labor intensive. Therefore, we develop preference-driven similarity join to overcome the limitations. Before a formal definition, we first introduce the concept of \emph{result set preference}, to capture the user preferences on a similarity join result. Formally, a result set preference is a score function~$h$: $(R, S, sim, \theta) \rightarrow \mathbb{R}$, where $\mathbb{R}$ is the set of real numbers. {Obviously, a result set is determined by $R$, $S$, $sim$ and $\theta$.  The result set preference gives a score on how well the result set meets a user's preference.  The higher the score, the better.} If $R$, $S$, and $sim$ are clear from the context, we write $h(\theta)$ for the sake of brevity.
\begin{definition}[Preference-driven Similarity Join]
Given a similarity function $sim$ and a result set preference $h$, return the most preferred result
$join(\theta^*)$ where $\theta^*$ is the largest threshold {in the finite set} maximizing $h$.
\qed
\end{definition}

For the ease of presentation, we introduce some notations.  First, we denote by
$\join^=(\theta) = \condset{(r, s) \in join(\theta)}{sim(r, s) = \theta}$ the subset of joined pairs w.r.t.\ similarity $\theta$.  For example, in Figure~\ref{fig:toy/table}, $\join^=(0.75) = \set{\tuple{\texttt{\textcolor{Highlight}{dbs}}, \texttt{\textcolor{Highlight}{db}m\highlight{s}}}}$.
Second, we denote by
$\cover^R(\theta) = \condset{r}{\exists s: (r, s) \in join(\theta)}$
the set of objects in $R$ that are joined when the similarity threshold is $\theta$.
Similarly, we have $\cover^S(\theta) = \condset{s}{\exists r: (r, s) \in join(\theta)}$.
For example, in Figure~\ref{fig:toy/table}, $\cover^R(0.75) = \set{\texttt{{db\_ms}}, \texttt{{dbs}}, \texttt{{vldb}}}$. Last, for $r \in R$, we denote by $top^S(r)$ the set of most similar object(s) in $S$ including ties. Similarly, we have $top^R(s)$.
For example, in Figure~\ref{fig:toy/table},
$top^S(\texttt{{vldb}}) = \set{\texttt{p{vldb}}, \texttt{{vl}\_{db}}}$.

\section{Two Result Set Preferences}
\label{sec:preference}

As a proof of concept, we present two result set preferences. 

\subsection{\pone: Maximum Number of Non-Trivial Connected Components}
\label{sec:preference/cluster}
Our first preference, \pone, partitions objects into different groups without any prior knowledge.
For a similarity threshold $\theta$, we construct a bipartite graph $G_{\theta} = \tuple{U^R, V^S, join(\theta)}$, where $U^R$ and $V^S$ are two disjoint sets of nodes representing the objects in $R$ and $S$, respectively, and every pair in $join(\theta)$ defines an edge.
As indicated in~\cite{DBLP:conf/dmkd/MongeE97}, each connected component in the bipartite graph is an entity, where the objects in the same connected component are the entity's different representations.

\pone prefers the similarity join result with more non-trivial connected components (i.e., connected components with at least two nodes, one in $R$ and another in $S$). The intuition is that heuristically we want to match as many entities as possible across $R$ and $S$. Let $J(G_{\theta})$ denote the set of non-trivial connected components in a bipartite $G_{\theta}$, we define result set preference \pone as
\begin{align*}
h_c(\theta) = \abs{J(G_{\theta})}
\end{align*}
Figure~\ref{fig:toy/outerJoin} gives an example of $h_c$ on our toy dataset in Figure~\ref{fig:toy/table}.

\begin{figure}[t]
\small
\centering
\captionsetup[subfigure]{labelformat=empty}
\subfloat[$(r, s) \in R \times S$]{
\begin{tabular}{c}
$\tuple{\texttt{\textcolor{Highlight}{db\_ms}}, \texttt{\textcolor{Highlight}{dbms\_}}}$ \\
\hline
$\tuple{\texttt{\textcolor{Highlight}{db}\_\highlight{ms}}, \texttt{\textcolor{Highlight}{dbms}\ }}$ \\
$\tuple{\texttt{\textcolor{Highlight}{vldb}\ }, \texttt{p\textcolor{Highlight}{vldb}}}$ \\
$\tuple{\texttt{\textcolor{Highlight}{vldb}\ }, \texttt{\textcolor{Highlight}{vl}\_\highlight{db}}}$ \\
\hline
$\tuple{\texttt{\textcolor{Highlight}{dbs}\ \ }, \texttt{\textcolor{Highlight}{db}m\highlight{s}\ }}$ \\
\hline
$\tuple{\texttt{\textcolor{Highlight}{db}\ \ \ }, \texttt{\_\textcolor{Highlight}{db}\ \ }}$ \\
$\tuple{\texttt{\textcolor{Highlight}{dblp}\_}, \texttt{\textcolor{Highlight}{p}v\textcolor{Highlight}{ldb}}}$ \\
$\tuple{\texttt{\textcolor{Highlight}{dbl}p\highlight{\_}}, \texttt{v\textcolor{Highlight}{l\_db}}}$ \\
\hline
$\tuple{\texttt{\textcolor{Highlight}{db\_}ms}, \texttt{\textcolor{Highlight}{\_db}\ \ }}$ \\
$\tuple{\texttt{\textcolor{Highlight}{db}lp\highlight{\_}}, \texttt{\textcolor{Highlight}{\_db}\ \ }}$ \\
$\tuple{\texttt{\textcolor{Highlight}{dbs}\ \ }, \texttt{\textcolor{Highlight}{db}m\highlight{s}\_}}$ \\
\hline
$\tuple{\texttt{\textcolor{Highlight}{db}\ \ \ }, \texttt{\textcolor{Highlight}{db}ms\ }}$ \\
$\tuple{\texttt{\textcolor{Highlight}{db}s\ \ }, \texttt{\_\textcolor{Highlight}{db}\ \ }}$ \\
\hline
$\cdots$ \\
\end{tabular}
}
~
\subfloat[$\theta$]{
\begin{tabular}{c}
\multirow{1}{*}{$1$} \\
\hline
\multirow{3}{*}{$0.8$} \\
\\
\\
\hline
$0.75$ \\
\hline
\multirow{3}{*}{${0.667}$} \\
 \\
 \\
\hline
\multirow{3}{*}{$0.6$} \\
 \\
 \\
\hline
\multirow{2}{*}{$0.5$} \\
\\
\hline
$\cdots$ \\
\end{tabular}
}
~
\subfloat[$h_c(\theta)$]{
\begin{tabular}{c}
$1$ \\
\hline
\multirow{3}{*}{$2$} \\
 \\
 \\
\hline
$2$ \\
\hline
\multirow{3}{*}{$\mathbf{3}$} \\
 \\
 \\
\hline
\multirow{3}{*}{$1$} \\
 \\
 \\
\hline
\multirow{2}{*}{$1$} \\
 \\
\hline
$\cdots$ \\
\end{tabular}
}
~
\subfloat[$h_o(\theta)$]{
\begin{tabular}{c}
$1$ \\
\hline
\multirow{3}{*}{$\mathbf{2}$} \\
 \\
 \\
\hline
$\mathbf{2}$ \\
\hline
\multirow{3}{*}{$\mathbf{2}$} \\
 \\
 \\
\hline
\multirow{3}{*}{$-1$} \\
 \\
 \\
\hline
\multirow{2}{*}{$-3$} \\
 \\
\hline
$\cdots$ \\
\end{tabular}
}

\caption{Example of $h_c$ and $h_o$.}\label{fig:toy/outerJoin}\label{fig:toy/oneToMany}
\end{figure}

Given a similarity join result $join(\theta)$, the time complexity of computing $h_c(\theta)$ is $O(\abs{R} + \abs{S} + \abs{join(\theta)})$ by simply computing the connected components of the bipartite graph $G_{\theta}$.
Since we need to compute the preference scores for multiple similarity thresholds $\theta$, there are opportunities to reduce the computational cost. We will discuss the details in Section~{\ref{sec:algorithm/incscore}}.

\subsection{\ptwo: Minimum Outer-Join Size}
\label{sec:preference/outerJoin}

We make the following observation.  On the one hand, if we set a too high similarity threshold and thus be too strict in similarity, many objects may not be matched with their counterparts due to noise. The extreme case is that, if we set the similarity threshold to $1$, only those perfectly matching objects are joined.  On the other hand, if we set a too low similarity threshold and thus be too loose in similarity, many not-matching objects may be joined by mistake. The extreme case is that, by setting the similarity threshold to $0$, every pair of objects in the two sets are joined. 

We need to find a good balance between the two and strive to a good tradeoff.  Technically, full outer-join includes both joined entries and those not joined (by matching with $\NULL$).  The size of the full outer-join is jointly determined by the number of objects joined and the number of objects not joined. The two numbers trade off each other.  Therefore, if we minimize the size of the full outer-join, we reach a tradeoff between the two ends.  This is the intuition behind our second preference, \ptwo.  

\begin{figure}[t]
\small
\centering
\captionsetup[subfigure]{labelformat=empty}
\begin{tabular}{c}
$\tuple{\texttt{\textcolor{Highlight}{db}\_\highlight{ms}~}, \texttt{\textcolor{Highlight}{dbms}~}}$ \\
$\tuple{\texttt{\textcolor{Highlight}{db\_ms}~}, \texttt{\textcolor{Highlight}{dbms\_}}}$ \\
$\tuple{\texttt{\textcolor{Highlight}{dbs}\ \ \ }, \texttt{\textcolor{Highlight}{db}m\highlight{s}~}}$ \\
$\tuple{\texttt{\textcolor{Highlight}{vldb}\ \ }, \texttt{p\textcolor{Highlight}{vldb}}}$ \\
$\tuple{\texttt{\textcolor{Highlight}{vldb}\ \ }, \texttt{\textcolor{Highlight}{vl}\_\highlight{db}}}$
\end{tabular}
$\cup$
\begin{tabular}{c}
$\tuple{\texttt{db\ \ \ }, \NULL}$ \\
$\tuple{\texttt{dblp\_}, \NULL}$
\end{tabular}
$\cup$
\begin{tabular}{c}
$\tuple{\NULL, \texttt{\_db}}$
\end{tabular}

\caption{Example of $outjoin(\theta)$, where $\theta = 0.75$.}\label{fig:toy/outerJoin/output}
\end{figure}

The full outer similarity join result w.r.t.\ a similarity threshold is 
\begin{align*}
outjoin(\theta) = \join(\theta) &\cup \condset{(r, \NULL)}{r \in R \setminus \cover^R(\theta)} \\
&\cup \condset{(\NULL, s)}{s \in S \setminus \cover^S(\theta)}
\end{align*}
where $\condset{(r, \NULL)}{r \in R \setminus \cover^R(\theta)}$ is the set of objects in $R$ that are not joined, and $\condset{(\NULL, s)}{s \in S \setminus \cover^S(\theta)}$ is the set of objects in $S$ that are not joined.
Figure~\ref{fig:toy/outerJoin/output} illustrates an example of a full outer-join.

We define our preference \ptwo as
\begin{align*}
h_o(\theta) = \abs{R}+\abs{S}- \abs{outjoin(\theta)} = \abs{\cover^R(\theta)} + \abs{\cover^S(\theta)} - \abs{\join(\theta)}
\end{align*}
where $\abs{R}+\abs{S}$ is a constant given $R$ and $S$. Figure~\ref{fig:toy/outerJoin} gives an example of $h_o$ on our toy dataset.

This preference gives a penalty when multiple objects in a set are joined with multiple objects in the other set. Joining $x$ objects in $R$ and $y$ objects in $S$ results in $x \cdot y$ pairs in the full outer-join. Not joining them results in at most $x + y$ of pairs in the full outer-join. When $x > 1$ and $y > 1$, we have $x \cdot y \ge x + y$.

Given a threshold $\theta$, it is straightforward to compute $cover^R(\theta)$, $cover^S(\theta)$ and $join(\theta)$ according to their definitions by scanning the join result $join(\theta)$. The time complexity of computing $h_o(\theta)$ is $O(\abs{join(\theta)})$ for each $\theta$.
Since we need to compute the preference scores for multiple similarity thresholds $\theta$, there are opportunities to reduce the computational cost. We will discuss the details in Section~{\ref{sec:algorithm/incscore}}.

\section{Algorithm Framework}\label{sec:framework}

In this section, we present an efficient framework for the preference-driven similarity join problem. A brute-force solution is to compute the similarities for all the object pairs, calculate the preference score w.r.t.\ each possible threshold, and return the similarity join result with the highest preference score. This brute-force method may be inefficient.  Computing similarities for all pairs is often prohibitive. The number of possible thresholds can be very large, $\abs{R}\times \abs{S}$ in the worst case. It is crucial to reduce the cost involved in this process.

To tackle the challenges, we propose a preference-driven similarity join framework in Algorithm~\ref{algo:general-1}. Central to the framework are four key functions. Function \textsc{PivotalThresholds} (Section~\ref{threshold}) identifies a small set of thresholds $\Theta$, called the pivotal thresholds, that are guaranteed to cover the best preference score obtained from all the possible thresholds. Function \textsc{IncrementalSimJoin} (Section~\ref{sec:algorithm/incsimjoin}) checks the pivotal thresholds in value descending order and incrementally computes the similarity join result for each threshold. We propose a new optimization technique, called lazy evaluation, to further improve the efficiency.  Function \textsc{IncrementalScore} (Section~\ref{sec:algorithm/incscore}) computes the preference score for each threshold. It is possible to reduce the cost by computing the scores incrementally when checking the pivotal thresholds in value descending order. Function \textsc{EarlyTermination} (Section~\ref{termination}) determines whether we can stop checking the remaining pivotal thresholds by comparing the upper bound $\widehat{h}(\theta_i)$ with currently best score $h(\theta^*)$ once a similarity join result $\join(\theta_i)$ is computed. 

\begin{algorithm2e}[t]
\small
\DontPrintSemicolon

\SetKwInput{Prototype}{Function}
\SetKwInput{Input}{Input}
\SetKwInput{Output}{Output}

\SetKw{LogicalAnd}{and}
\SetKw{LogicalOr}{or}
\SetKw{LogicalNot}{not}
\SetKw{Continue}{continue}
\SetKw{Break}{break}
\SetKw{True}{true}
\SetKw{False}{false}
\Input{objects $R$ and $S$, similarity function $sim$, preference $h$}
\Output{the most preferred join result $\join(\theta^*)$}
\BlankLine
	$\Theta \leftarrow \textsc{\highlight{PivotalThresholds}}(R, S, sim)$\;
    \ForEach{threshold $\theta_i \in \Theta$ in descending order}{
	    $\join^=(\theta_i) \leftarrow \textsc{\highlight{IncrementalSimJoin}}(\theta_{i-1}, \theta_i)$\;
	    $\join(\theta_i) \leftarrow \join(\theta_{i-1}) \cup \join^=(\theta_i)$\;
        \lIf{$h(\theta_i) > h(\theta^*)$}{
            $\theta^* \leftarrow \theta_i$
            \tcp*[f]{\textsc{IncrmentalScore}}
        }
	    \lElseIf{$\widehat{h}(\theta_i) \le h(\theta^*)$}{
		    \Break
            \tcp*[f]{\textsc{EarlyTermination}}
	    }
    }
    \Return $\join(\theta^*)$\;

\caption{Preference-driven similarity join framework.}
\label{algo:general-1}
\end{algorithm2e}

\subsection{Pivotal Thresholds}\label{threshold}
Not every threshold has a chance to lead to the maximum preference score. In this section, we study how to identify a small set of thresholds $\Theta$ such that the maximum preference score can be obtained by only evaluating $\Theta$, i.e., $\max_{\theta \in [0, 1]} h(\theta) = \max_{\theta \in \Theta} h(\theta)$.

Consider $\Theta = \condset{sim(r, s)}{ r \in R, s \in S, r \in \topk^R(s)~\land~s \in \topk^S(r)}$.
Clearly, $\Theta$ is often dramatically smaller than $\abs{R} \times \abs{S}$, as
\begin{align*}
\abs{\Theta}  \le & \min \Big\{\abs{\condset{sim(r, s)}{ r \in R, s \in S \text{ where } s \in \topk^S(r)}}, \\
&  \abs{\condset{sim(r, s)}{r \in R, s \in S \text{ where } r \in \topk^R(s) }} \Big\} \le  \min\set{\abs{R}, \abs{S}}
\end{align*}
For example, in Figure~\ref{fig:toy/table}, $\Theta = \set{1, 0.8, 0.667}$.

We can show that both \pone and \ptwo have the same set of pivotal thresholds.
The basic idea is that, for any $\theta \notin \Theta$, there exists $\theta' > \theta$ such that $h(\theta') \ge h(\theta)$.
Remind that, through the paper, we only need to discuss the thresholds within the finite set $\condset{sim(r,s)}{r \in R, s \in S}$ as discussed in Section~\ref{sec:problem}.
\begin{lemma}
\label{lemma:cluster/threshold}
Given a threshold $\theta$, if $r \not\in \topk^R(s)$ or $s \not\in \topk^S(r)$ for any $(r, s) \in \join^=(\theta)$, then $\exists \theta' > \theta: h_c(\theta') \ge h_c(\theta)$.
\end{lemma}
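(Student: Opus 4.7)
The plan is to show $h_c(\theta') \ge h_c(\theta)$ for $\theta'$ chosen as the smallest element of the finite similarity set $\condset{sim(r,s)}{r \in R, s \in S}$ that is strictly greater than $\theta$. For this choice, $\join(\theta') = \join(\theta) \setminus \join^=(\theta)$, so $G_{\theta'}$ is obtained from $G_{\theta}$ by deleting exactly the edges of similarity $\theta$. Since $G_{\theta'}$ is a subgraph of $G_{\theta}$, every connected component of $G_{\theta'}$ is contained in some connected component of $G_{\theta}$. Consequently, it suffices to exhibit, for each non-trivial connected component $C$ of $G_{\theta}$, at least one non-trivial connected component of $G_{\theta'}$ lying inside $V(C)$; the induced map on components is then automatically injective and yields $\abs{J(G_{\theta'})} \ge \abs{J(G_{\theta})}$.

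First I would verify that $\theta'$ exists, i.e., that $\theta$ is not the maximum of the finite similarity set. If it were, then for every $(r,s) \in \join^=(\theta)$ the value $\theta$ would already be the global maximum of $sim$, forcing both $s \in \topk^S(r)$ and $r \in \topk^R(s)$ simultaneously and contradicting the hypothesis. So $\theta'$ is well-defined.

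The crux is that each non-trivial component $C$ of $G_{\theta}$ must retain at least one edge in $G_{\theta'}$. Being non-trivial, $C$ contains some edge $(r,s)$; by the hypothesis, at least one of $r \notin \topk^R(s)$ or $s \notin \topk^S(r)$ holds, and by symmetry I assume the latter. Then some $s' \in S$ satisfies $sim(r, s') > \theta$, so $(r, s') \in \join(\theta') \subseteq \join(\theta)$. Since $r \in V(C)$ and $(r, s')$ is an edge of $G_{\theta}$, connectivity in $G_{\theta}$ places $s'$ in $V(C)$ as well. Hence $(r, s')$ is an edge of $G_{\theta'}$ inside $V(C)$, so the component of $G_{\theta'}$ containing $r$ and $s'$ has a vertex on each side and is therefore non-trivial.

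The main subtlety I expect is exactly this last point: the hypothesis produces a strictly more similar partner somewhere in $R$ or $S$, and a priori such a partner could live in a different component of $G_{\theta}$. The resolution is that any pair with similarity $> \theta$ already belongs to $\join(\theta)$, so connectivity in $G_{\theta}$ forces the partner into $V(C)$. Once this is in hand, injectivity of the map from non-trivial components of $G_{\theta}$ to their chosen sub-components in $G_{\theta'}$ follows from the disjointness of distinct components' vertex sets, and the conclusion $h_c(\theta') \ge h_c(\theta)$ is immediate.
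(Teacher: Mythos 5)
Your proof is correct and follows essentially the same route as the paper's: both take $\theta'$ to be the next threshold above $\theta$, so that $G_{\theta}$ and $G_{\theta'}$ differ exactly by the edges of similarity $\theta$, and both use the top-$k$ hypothesis to produce, for each such edge, a strictly more similar partner whose edge already lies in $G_{\theta'}$ --- the paper phrases this as ``adding the $\theta$-edges cannot create a new non-trivial component,'' while you phrase the contrapositive view, ``deleting the $\theta$-edges leaves a non-trivial sub-component inside each non-trivial component of $G_{\theta}$,'' and close with the injectivity count. One minor imprecision: the hypothesis only constrains pairs in $\join^=(\theta)$, so when you pick ``some edge $(r,s)$'' of $C$ you should first dispose of the easy case $sim(r,s) > \theta$ (that edge itself survives into $G_{\theta'}$) before invoking the hypothesis on an edge of similarity exactly $\theta$.
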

\begin{proof}
Let $\theta'$ be a threshold such that $join(\theta) \setminus join(\theta') = join^=(\theta)$.
Bipartite $G_{\theta}$ can be derived by adding those new edges in $join^=(\theta)$ to bipartite $G_{\theta'}$.
For any $(r, s) \in \join^=(\theta)$, if $r \not\in \topk^R(s)$, then $r$ must be already in a non-trivial connected component of $G_{\theta'}$, and $s$ can only be added to the non-trivial connected component where $r$ belongs to. Similar situation happens if $s \not\in \topk^S(r)$. Since there is not any new non-trivial connected component in $G_{\theta}$ comparing to $G_{\theta'}$, $h_c(\theta') \ge h_c(\theta)$.
\end{proof}
\begin{lemma}
\label{lemma:outerJoin/threshold}
Given a threshold $\theta$, if $r \not\in \topk^R(s)$ and $s \not\in \topk^S(r)$ for any $(r, s) \in \join^=(\theta)$, then $\exists \theta' > \theta: h_o(\theta') \ge h_o(\theta)$.
\end{lemma}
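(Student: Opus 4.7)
The plan is to mirror the structure of the proof of Lemma~\ref{lemma:cluster/threshold}, but track the three quantities $\abs{\cover^R(\theta)}$, $\abs{\cover^S(\theta)}$, and $\abs{\join(\theta)}$ that make up $h_o$ instead of counting connected components. First, I would choose $\theta'$ to be the smallest threshold in the finite set $\condset{sim(r,s)}{r\in R, s\in S}$ strictly greater than $\theta$, so that
\[
\join(\theta)\setminus\join(\theta') = \join^=(\theta),
\]
and therefore $\abs{\join(\theta)} = \abs{\join(\theta')} + \abs{\join^=(\theta)}$.

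Next, I would argue that the two cover sets do not grow when passing from $\theta'$ down to $\theta$. Fix any $(r,s)\in\join^=(\theta)$. By hypothesis $r\not\in\topk^R(s)$, so there is some $r'\in R$ with $sim(r',s)>sim(r,s)=\theta$, hence $(r',s)\in\join(\theta')$ and thus $s\in\cover^S(\theta')$. Symmetrically, $s\not\in\topk^S(r)$ yields some $s'$ with $(r,s')\in\join(\theta')$, so $r\in\cover^R(\theta')$. Since this holds for every pair in $\join^=(\theta)$, no new elements appear in either cover when the threshold drops to $\theta$, i.e.\ $\cover^R(\theta)=\cover^R(\theta')$ and $\cover^S(\theta)=\cover^S(\theta')$.

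Combining these observations with the definition
\[
h_o(\theta) = \abs{\cover^R(\theta)} + \abs{\cover^S(\theta)} - \abs{\join(\theta)},
\]
I would conclude $h_o(\theta) = h_o(\theta') - \abs{\join^=(\theta)} \le h_o(\theta')$, which is exactly the desired inequality with this choice of $\theta'>\theta$.

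The only subtlety I anticipate is the bookkeeping around the hypothesis being stated with a conjunction ($r\notin\topk^R(s)$ \emph{and} $s\notin\topk^S(r)$) rather than the disjunction used in Lemma~\ref{lemma:cluster/threshold}. The conjunction is exactly what is needed here, because $h_o$ is sensitive to coverage on \emph{both} sides: to keep both $\cover^R$ and $\cover^S$ from growing when we lower the threshold to $\theta$, each new pair $(r,s)\in\join^=(\theta)$ must contribute a node that is already covered on each side. Once that is made explicit, the rest is a direct identity, so I do not foresee any real obstacle beyond carefully invoking the definitions.
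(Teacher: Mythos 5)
Your argument is valid for the lemma as literally stated: under the conjunction hypothesis, every $(r,s)\in\join^=(\theta)$ has both endpoints already covered at the next-higher threshold $\theta'$, so neither cover set changes and $h_o(\theta)=h_o(\theta')-\abs{\join^=(\theta)}\le h_o(\theta')$. That derivation is clean, correct, and in fact simpler than what the paper writes.

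The divergence from the paper's proof matters, though. The paper proves the lemma under the \emph{disjunction} hypothesis ($r\not\in\topk^R(s)$ \emph{or} $s\not\in\topk^S(r)$ for every pair in $\join^=(\theta)$) --- the ``and'' in the statement appears to be a typo, since $\Theta$ retains exactly those thresholds realized by a pair with \emph{both} top-memberships, so a discarded threshold is only guaranteed to satisfy the disjunction for each of its pairs. Under the disjunction your central step fails: a pair with $r\in\topk^R(s)$ but $s\not\in\topk^S(r)$ can make $s$ newly covered at $\theta$, so $\cover^S(\theta)\ne\cover^S(\theta')$ in general. The paper instead uses a charging argument: the sets $\condset{(r,s)\in\join^=(\theta)}{s\in\topk^S(r)}$ and $\condset{(r,s)\in\join^=(\theta)}{r\in\topk^R(s)}$ are disjoint (a pair in both would violate the disjunction), so their sizes sum to at most $\abs{\join^=(\theta)}$, and projecting them onto $R$ and $S$ accounts for every newly covered object; hence $\abs{\join}$ grows by at least as much as $\abs{\cover^R}+\abs{\cover^S}$ does, giving $h_o(\theta')-h_o(\theta)\ge 0$. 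So your route buys simplicity only because it assumes more; to actually justify \textsc{PivotalThresholds} you need the paper's counting step, and your closing remark that ``the conjunction is exactly what is needed here'' is precisely the point to revisit.
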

\begin{proof}
Let $\theta'$ be a threshold such that $join(\theta) \setminus join(\theta') = join^=(\theta)$.
When $r \not\in \topk^R(s)~\lor~s \not\in \topk^S(r)$ for any $(r, s) \in \join^=(\theta)$,
\begin{align*}
& \abs{join(\theta)} - \abs{join(\theta')} = \abs{join^=(\theta)} \\
\ge& \abs{\condset{(r, s) \in join^=(\theta)}{s \in \topk^S(r)}} + \abs{\condset{(r, s) \in join^=(\theta)}{r \in \topk^R(s) }} \\
\ge& \abs{\condset{r \in cover^R(\theta)}{s \in S \text{ where } sim(r, s) = \theta~\land~s \in \topk^S(r) }} \\
+& \abs{\condset{s \in cover^S(\theta)}{r \in R \text{ where } sim(r, s) = \theta~\land~r \in \topk^S(s) }} \\
=& \abs{cover^R(\theta)} - \abs{cover^R(\theta')} + \abs{cover^S(\theta)} - \abs{cover^S(\theta')}
\end{align*}

Thus,
$h_o(\theta') - h_o(\theta) \ge 0$.
\end{proof}

Thanks to the existing fast top-$k$ similarity search algorithms~\cite{DBLP:conf/icde/XiaoWLS09,DBLP:journals/kais/XuGPWA16}, obtaining the pivotal thresholds can be efficient by computing the most similar objects of each object in $R$ and $S$, respectively. According to the above lemmas, it is guaranteed that the largest threshold having the maximum score in the finite set of thresholds is always in $\Theta$.

\subsection{Incremental Similarity Join}
\label{sec:algorithm/incsimjoin}
In this section, we present an efficient algorithm that incrementally computes the similarity join result $join(\theta_i)$ w.r.t.\ threshold $\theta_{i}$.
We do not need to conduct a similarity join for each pivotal threshold.
Since $join(\theta) = \cup_{\theta' \ge \theta} join^=(\theta')$, for each threshold $\theta$, we only need to compute the respective newly joined pairs $join^=(\theta)$. Thus, we can enumerate the thresholds in the value descending order, and compute the respective join results incrementally.

The algorithm consists of two steps. First, the algorithm incrementally computes a set of candidate pairs $cand(\theta_i)$. Second, the algorithm evaluates the similarity of each candidate pair and returns the pairs whose similarity values are at least the threshold $\theta_i$. While this two-step approach has been used by existing similarity join algorithms~\cite{DBLP:conf/icde/ChaudhuriGK06,DBLP:conf/www/BayardoMS07,DBLP:conf/www/XiaoWLY08,DBLP:conf/sigmod/WangLF12}, our contribution is a novel optimization technique, called lazy evaluation, which lazily evaluate the similarity of each candidate pair and reduces the cost. 

We focus on set-based similarity functions in this paper. Similar strategies can be applied to other kinds of similarity functions, like string-based or vector-based. Note that multi-set (bag) can also be used here instead of set.
Table~\ref{tab:similarity} shows the definitions of the similarity functions. Jaccard, overlap, dice, and cosine similarity are widely adopted in existing similarity join literature~\cite{DBLP:conf/icde/ChaudhuriGK06,DBLP:conf/www/BayardoMS07,DBLP:conf/www/XiaoWLY08,DBLP:conf/sigmod/WangLF12}. 
In addition, we include Tversky similarity~\cite{Scholar:tversky1977similarity}, which is a special asymmetric set-based similarity with different weights $\alpha$ and $1 - \alpha$ on $r$ and $s$, respectively. This similarity function is very useful in certain scenarios, like matching a text with an entity contained by the text.
\begin{table}[t]
\small
\centering
\caption{Summary of set-based similarity functions. 
Here $bound^{min}_{i,j} = \abs{r[:i] \cap s[:j]} \le \abs{r \cap s}$ and $bound_{i,j}^{max} = \abs{r[:i] \cap s[:j]} + \min\set{\abs{r} - i, \abs{s} - j} \ge \abs{r \cap s}$.
}
\label{tab:similarity}

\setlength{\tabcolsep}{0.1\tabcolsep}
\begin{tabular}{l|ccc|c}
Similarity & $sim$ & $sim_{i, j}^{min}$ & $sim_{i, j}^{max}$ & $t_{\theta}$ \\
\hline
Jaccard & $\frac{\abs{r \cap s}}{\abs{r} + \abs{s} - \abs{r \cap s}}$ & $\frac{bound^{min}_{i,j}}{\abs{r} + \abs{s} - bound^{min}_{i,j}}$ & $\frac{bound_{i,j}^{max}}{\abs{r} + \abs{s} - bound^{max}_{i,j}}$ & $\ceil{\theta \cdot \abs{r}}$ \\
Overlap & $\frac{\abs{r \cap s}}{\max\set{\abs{r}, \abs{s}}}$ & $\frac{bound^{min}_{i,j}}{\max\set{\abs{r}, \abs{s}}}$ & $\frac{bound_{i,j}^{max}}{\max\set{\abs{r}, \abs{s}}}$ & $\ceil{\theta \cdot \abs{r}}$ \\
Dice &  $\frac{2 \cdot \abs{r \cap s}}{\abs{r} + \abs{s}}$ & $\frac{2 \cdot bound^{min}_{i,j}}{\abs{r} + \abs{s}}$ & $\frac{2 \cdot bound_{i,j}^{max}}{\abs{r} + \abs{s}}$ & $\ceil{\frac{\theta}{2} \cdot \abs{r}}$ \\
Cosine &  $\frac{\abs{r \cap s}}{\sqrt{\abs{r} \cdot \abs{s}}}$ & $\frac{bound^{min}_{i,j}}{\sqrt{\abs{r} \cdot \abs{s}}}$ & $\frac{bound_{i,j}^{max}}{\sqrt{\abs{r} \cdot \abs{s}}}$ & $\ceil{\theta^2 \cdot \abs{r}}$ \\
Tversky &  $\frac{\abs{r \cap s}}{\alpha \cdot \abs{r} + (1 - \alpha) \cdot \abs{s}}$ & $\frac{bound^{min}_{i,j}}{\alpha \cdot \abs{r} + (1 - \alpha) \cdot \abs{s}}$ & $\frac{bound_{i,j}^{max}}{\alpha \cdot \abs{r} + (1 - \alpha) \cdot \abs{s}}$ & $\ceil{\theta \cdot \alpha \cdot \abs{r}}$ \\
\end{tabular}
\setlength{\tabcolsep}{10\tabcolsep}
\end{table}
Given an object $r$ as a set, we use $r[:i]$ to denote the first $i$ elements of $r$ assuming a global ordering of elements in the set.

\subsubsection{Candidate Pair Generation}
Established by prefix filtering~\cite{DBLP:conf/icde/ChaudhuriGK06}, if $sim(r, s) \ge \theta$, the number of elements in the overlap of the sets $\abs{r \cap s}$ is no fewer than an overlap threshold $t_\theta$ w.r.t. $\abs{r}$, where the overlap thresholds for set-based similarities are shown in Table~\ref{tab:similarity}. Thus, the candidate pair generation problem is converted to how to filter out the pairs with fewer than $t_\theta$ common elements.

To filter out the pairs with less than $t_\theta$ common elements, we fix a global ordering on the elements of all the objects, and sort the elements in each object based on the ordering. Like~\cite{DBLP:conf/sigmod/WangLF12}, we use the inverse document frequency as the global ordering. Prefix filtering~\cite{DBLP:conf/icde/ChaudhuriGK06} establishes that if $\abs{r \cap s} \ge t_{\theta}$, then $r[:\#prefix_{\theta}(r)] \cap s[:\#prefix_{\theta}(r)] \ne \emptyset$, where $\#prefix_{\theta}(r) = \abs{r} - t_{\theta} + 1$. 

Using an inverted index, we do not need to enumerate each pair $(r, s)$ to verify whether $r[:\#prefix_{\theta}(r)] \cap s[:\#prefix_{\theta}(r)] \ne \emptyset$. An inverted index maps an element to a list of objects containing the element. After building the inverted index for $S$, for each $r \in R$, we only need to merge the inverted lists of the elements in $r[:\#prefix_{\theta}(r)]$ to retrieve each $s \in S$ such that $r[:\#prefix_{\theta}(r)] \cap s[:\#prefix_{\theta}(r)] \ne \emptyset$.

Our goal is to generate the candidate pairs for $[\theta_i, \theta_{i-1})$.
We use an incremental prefix filtering approach~\cite{DBLP:conf/icde/XiaoWLS09} that memorizes previous results to avoid regenerating the candidate pairs for $[\theta_{i-1}, 1]$.

\subsubsection{Lazy Evaluation}
Suppose we want to check whether the similarity of a candidate pair $(r, s) \in cand(\theta_i)$ is no smaller than a threshold $\theta_{i}$ or not. The idea of lazy evaluation is to iteratively compute both a maximum and a minimum possible value of $sim(r, s)$, denoted by $sim^{max}(r, s)$ and $sim^{min}(r,s)$, respectively. Interestingly, both $sim^{max}(r, s)$ and $sim^{min}(r,s)$ get tighter through the process, and finally converge at $sim(r, s)$. During this process, we use the values for lazy evaluation in two ways.
\begin{itemize}
\item
If $sim^{max}(r, s) < \theta_i$, then $sim(r, s) < \theta_i$. Thus, it is only necessary to resume evaluating $(r, s)$ for a future smaller threshold $\theta_{j}$ where $\theta_{j-1} > sim^{max}(r, s) \ge \theta_j$.
\item
If $sim^{min}(r, s) \ge \theta_i$, then $\theta_{i - 1} > sim(r, s) \ge \theta_i$. Thus, $(r, s)$ does not need to be fully evaluated at all.
\end{itemize}

We scan $r$ and $s$ iteratively together from left to right, according to the global ordering. Assuming $r[:i]$ and $s[:j]$ have been scanned, Table~\ref{tab:similarity} shows the maximum/minimum possible values of $sim(r, s)$.
Through the scanning,
we iteratively update $sim^{max}(r, s)$ and $sim^{min}(r, s)$ accordingly.

The pseudo-code of the lazy evaluation-powered algorithm is shown in Algorithm~\ref{algo:general}. The algorithm first computes $cand(\theta_i)$, by generating a set of candidate pairs for $[\theta_i, \theta_{i-1})$ together with the previously postponed candidate pairs from larger thresholds. Then, the algorithm examines each candidate pair in $cand(\theta_i)$ and decides whether it should be added into $\join^=(\theta_i)$ or postponed and added into $cand(\theta_j)$ for a smaller threshold $\theta_j$ (found by binary search over the rest of the thresholds). Finally, $\join^=(\theta_i)$ is returned.

\begin{algorithm2e}[t]
\small
\DontPrintSemicolon

\SetKwInput{Prototype}{Function}
\SetKwInput{Input}{Input}
\SetKwInput{Output}{Output}

\SetKw{LogicalAnd}{and}
\SetKw{LogicalOr}{or}
\SetKw{LogicalNot}{not}
\SetKw{Continue}{continue}
\SetKw{Break}{break}
\SetKw{True}{true}
\SetKw{False}{false}
\Input{thresholds $\theta_{i-1}$ and $\theta_{i}$ where $\theta_{i-1} > \theta_i$}
\Output{incremental similarity join result $\join^=(\theta_i)$}
\BlankLine
	$\join^=(\theta_i) \leftarrow \emptyset$ \;
	let $cand(\theta_i)$ be the candidate pairs for $[\theta_i, \theta_{i-1})$\;

    \ForEach{pair $(r, s) \in  cand(\theta_i)$}{
        \While{$sim^{min}(r, s) < \theta_i \le sim^{max}(r, s)$}{
            update $sim^{max}(r, s)$ and $sim^{min}(r, s)$\;
        }
        \If{$sim^{max}(r, s) < \theta_i$}{
            find $\theta_j: \theta_{j - 1} > sim^{max}(r, s) \ge \theta_j$ by binary search\;
            add $(r, s)$ into $cand(\theta_j)$\;
        }
        \Else{
            add $(r, s)$ into $\join^=(\theta_i)$\;
        }
    }

    \Return $\join^=(\theta_i)$\;

\caption{Incremental similarity join.}
\label{algo:general}
\end{algorithm2e}

\subsection{Incremental Score Computation}
\label{sec:algorithm/incscore}

For both our preferences, computing the preference score for each similarity threshold is straightforward. However, as we need to compute the preference scores for multiple thresholds, it is necessary to explore how to further reduce the cost.

For preference \pone,
when computing the join result incrementally by decreasing $\theta$,
if two objects of each newly joined pair in $join^=(\theta)$ are in different connected components, the connected components are merged together to form a larger connected component.
We use a disjoint-set data structure to dynamically track newly joined pairs, and update the connected components accordingly. It only takes almost $O(1)$ amortized time~\cite{Scholar:dynamicgraph} for each newly joined pair in $join^=(\theta)$.

For preference \ptwo, when processing incrementally by the value decreasing order of $\theta$, we only need to scan each $join^=(\theta)$ to update $join(\theta)$, $cover^R(\theta)$, $cover^S(\theta)$, and further the preference score. It only takes $O(1)$ time for each newly joined pair in $join^=(\theta)$.

\subsection{Early Termination}\label{termination}
The goal of early termination is to determine if we can return the current most preferred result without evaluating the remaining thresholds. In our algorithm, the thresholds are evaluated in the descending order. Suppose threshold $\theta_i$ has just been evaluated. At this point, we have known the preference score for each threshold that is at least $\theta_i$. Let $h(\theta^*)$ denote the current best preference score, i.e.,
$h(\theta^*) = \max_{\theta \geq \theta_i} h(\theta)$.
If we can derive an upper-bound $\widehat{h}(\theta_i)$ of the preference scores for the remaining thresholds and show that the upper-bound is no larger than $h(\theta^*)$, then it is safe to stop at $\theta_i$ and $h(\theta^*)$ is the best result overall.

For \pone, as the threshold decreases, a previously unseen non-trivial connected component can only be created by merging two trivial connected components. Since a new non-trivial connected component contains at least one object from $R \setminus \cover^R(\theta_i)$ and one object from $S \setminus \cover^S(\theta_i)$, the number of non-trivial connected components in any $G_{\theta'}$ such that $\theta' < \theta_i$ is at most $\min\set{\abs{R \setminus \cover^R(\theta_i)}, \abs{S \setminus \cover^S(\theta_i)}}$. Thus, an upper-bound is
\[\widehat{h_c}(\theta_i) = h_c(\theta_i)+ \min\set{\abs{R \setminus \cover^R(\theta_i)}, \abs{S \setminus \cover^S(\theta_i)}}\]

For \ptwo, as the threshold decreases, the join result $\join(\theta_i)$ includes more pairs. Whenever a new pair $(r, s)$ is joined, $\abs{\join(\theta_i)}$ increases by one. The only way to get the preference score increased by $1$ is that $\abs{\cover^R(\theta_i)}$ and $\abs{\cover^S(\theta_i)}$ both increase by $1$. In this case, $r$ has to be come from $R \setminus \cover^R(\theta_i)$ and $s$ has to be come from $S \setminus \cover^S(\theta_i)$. Therefore, the preference score can at most increase by $\min\set{\abs{R \setminus \cover^R(\theta_i)}, \abs{S \setminus \cover^S(\theta_i)}}$. Accordingly, we set an upper-bound to
\[\widehat{h_o}(\theta_i) = h_o(\theta_i)+ \min\set{\abs{R \setminus \cover^R(\theta_i)}, \abs{S \setminus \cover^S(\theta_i)}}\]

\section{Experimental Results}
\label{sec:exp}

We present a series of experimental results in this section. The programs were implemented in Python running with PyPy\footnote{PyPy~(\url{http://pypy.org/}) is an advanced just-in-time compiler, providing $10$ times faster performance for our algorithm than the standard Python interpreter.}. The experiments were conducted using a Mac Pro Late $2013$ Server with Intel Xeon $3.70$GHz CPU, $64$GB memory, and $256$GB SSD.

\subsection{Datasets}
We adopt four real-world web datasets with ground-truth for evaluation. Table~\ref{tab:exp/dataset} shows the characteristics of each dataset.
\begin{table}[t]
\small
\centering
\caption{Dataset characteristics.}
\label{tab:exp/dataset}
\setlength{\tabcolsep}{0.1\tabcolsep}
\begin{tabular}{l|ccc:ccc:c}
Dataset & \multicolumn{3}{c:}{$R$} & \multicolumn{3}{c:}{$S$} & $\abs{\mathbb{C^+}}$ \\
& $\abs{R}$ & Max. Len. & Avg. Len. & $\abs{S}$ & Max. Len. & Avg. Len. & \\
\hline
\textsf{Wiki Editors} & $2{,}239$ & $20$ & $9.65$ & $1{,}922$ & $16$ & $8.66$ & $2{,}455$ \\
\textsf{Restaurants} & $533$ & $96$ & $48.38$ & $331$ & $91$ & $43.5$ & $112$ \\
\textsf{Scholar-DBLP} & $64{,}259$ & $259$ & $115.9$ & $2{,}562$ & $326$ & $106.61$ & $5{,}347$ \\
\textsf{Wiki Links} & $187{,}122$ & $1{,}393$ & $17.08$ & $168{,}652$ & $209$ & $16.35$ & $202{,}272$ \\
\end{tabular}
\setlength{\tabcolsep}{10\tabcolsep}

\end{table}

\textsf{Wiki Editors}~\cite{URL:roger} is about misspellings of English words, made by Wikipedia page editors where the errors are mostly typos. Each misspelling has at least one correct word. $R$ contains the misspellings, while $S$ contains the correct words. The ground-truth $\mathbb{C^+}$ are pairs of each misspelling and the corresponding correct word.

\textsf{Restaurants}~\cite{URL:riddle} links the restaurant profiles between two websites. Each profile contains the name and address of a restaurant. We remove the phone number and cuisine type, which are available in the original data, to make it more challenging. $R$ and $S$ are profiles of restaurants, and the ground-truth $\mathbb{C^+}$ identifies the pairs of profiles linking the same restaurants. 
Every restaurant has at most one match.

\textsf{Scholar-DBLP}~\cite{URL:matching} finds the same publications in Google Scholar and DBLP, where each record in DBLP has at least one matching record in Google Scholar. Each record on both websites contains the title, author names, venue, and year. $R$ and $S$ are publications identified by Google Scholar and DBLP, respectively, and the ground-truth $\mathbb{C^+}$ are pairs of records linking the same publications. 

\textsf{Wiki Links}~\cite{URL:wiki_links} is a large dataset containing short anchor text on web pages and the Wikipedia link that each anchor text contains. $R$ contains the anchor text, while $S$ contains the Wikipedia entities extracted from Wikipedia links. For example, link \url{https://en.wikipedia.org/wiki/Harry_Potter_(character)} is converted to entity ``Harry Potter''. The ground-truth $\mathbb{C^+}$ are pairs linking each anchor text and its entity. Each anchor text may also contain the text of the entity that the Wikipedia link refers to, such as ``\dots\underline{Harry Potter} is a title character\dots''. This dataset has multiple files, where each contains roughly $200,000$ mappings (about 200 MB). We used the first file in most of our evaluation except for the  scalability evaluation where the first four files were used.

To adopt set-based similarities, each string needs to be converted to a set. Empirically, people often convert a long string into a bag of words and convert a short string into a set of grams. For the first dataset, since a string represents a word, we convert each string into a bag of $2$-grams. For the rest three datasets, since the strings are much longer, we convert each string into a bag of words. We use Jaccard similarity for the first three datasets. For the \textsf{Wiki Links} dataset, since anchor text often contains both the entity text and some other irrelevant texts, we choose Tversky similarity ($\alpha = 0.1$).

\subsection{Threshold-driven vs. Preference-driven}

In this section, we empirically investigate the pros and cons of both our preference-driven approach and other possible solutions for tuning a similarity join threshold.

We demonstrate the sensitivity of thresholds w.r.t.\ $F_1$-score in Table~\ref{tab:exp/accuracy/classic}. A small deviation from the optimal threshold affects the result quality dramatically. This clearly shows that threshold tuning is crucial for threshold-driven similarity join. On two datasets, our preference-driven approach can achieve the optimal $F_1$-score, and on the other two datasets, they are very close to the optimal.   
\begin{table}[t]
\small
\centering
\caption{Sensitivity of thresholds on $F_1$-score. Three thresholds are used: the optimal one $\theta^{\#}$, the higher one $\theta^{+} = \min\set{1, \theta^{\#} + 0.1}$, and the lower one $\theta^{-} = \max\set{0, \theta^{\#} - 0.1}$.
Numbers in red are when ours achieves optimal $F_1$-scores.
}
\label{tab:exp/accuracy/classic}
\setlength{\tabcolsep}{0.5\tabcolsep}
\begin{tabular}{l|c:c:c|cl}
Dataset & using $\theta^{+}$ & using $\theta^{\#}$ & using $\theta^{-}$ & \multicolumn{2}{c}{Preference-driven} \\
\hline
\textsf{Wiki Editors} & $0.599$ & $0.764$ & $0.705$ & ${\highlight{0.764}}$ & using $h_c$ and $h_o$\\
\textsf{Restaurants} & $0.725$ & $0.816$ & $0.597$ & ${0.809}$ & using $h_o$\\
\textsf{Scholar-DBLP} & $0.697$ & $0.759$ & $0.554$ & ${0.754}$ & using $h_c$\\
\textsf{Wiki Links} & $0.774$ & $0.780$ & $0.624$ & ${\highlight{0.780}}$ & using $h_c$\\
\end{tabular}
\setlength{\tabcolsep}{2\tabcolsep}
\end{table}

\begin{figure*}[t]
\centering{
\subfloat{
\includegraphics[scale=0.75]{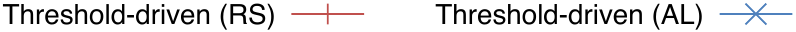}
\quad
\includegraphics[scale=0.75]{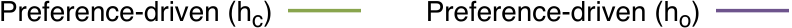}
}
\setcounter{subfigure}{0}
\\
\captionsetup[subfigure]{labelformat=empty}
\subfloat[{\textsf{Wiki Editors}}]{
\includegraphics[scale=0.75]{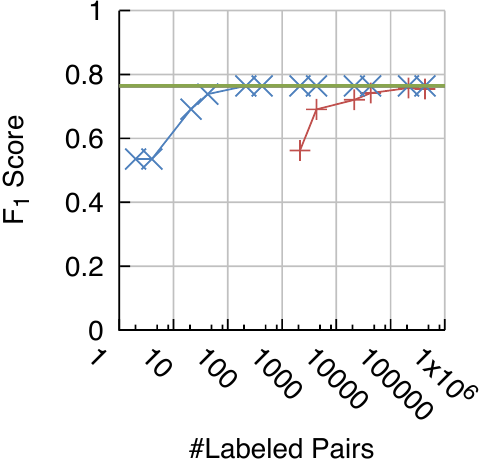}
}
~
\subfloat[{\textsf{Restaurants}}]{
\includegraphics[scale=0.75]{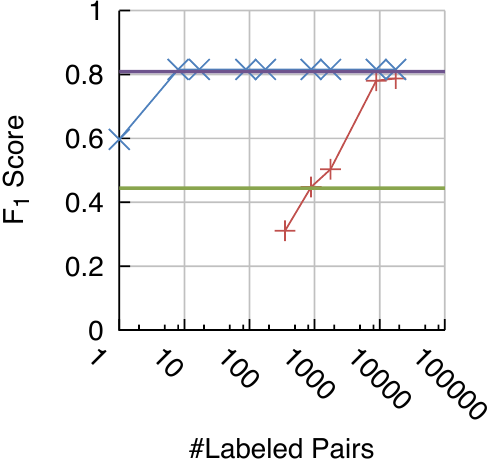}
}
~
\subfloat[\textsf{Scholar-DBLP}]{
\includegraphics[scale=0.75]{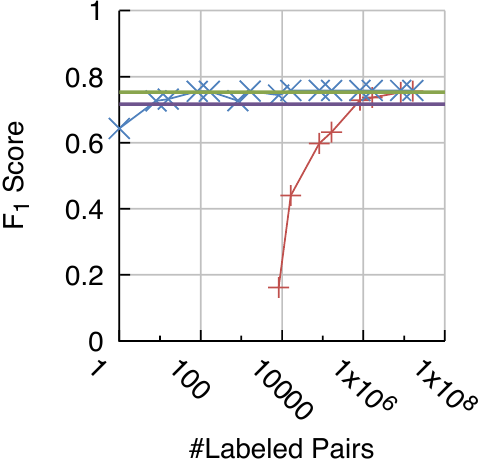}
}
~
\subfloat[{\textsf{Wiki Links}}]{
\includegraphics[scale=0.75]{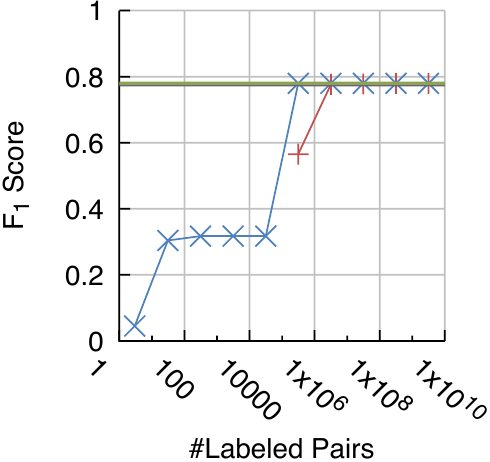}
}
}
\caption{Comparisons of accuracy between threshold-driven (using random sampling (RS) or active learning (AL) for threshold tuning) and preference-driven approaches.}\label{fig:exp/accuracy/sample}
\end{figure*}

\begin{figure}[t]
\centering{
\subfloat{
\includegraphics[scale=0.75]{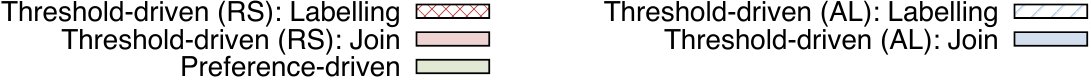}
}
\setcounter{subfigure}{0}
\\
\subfloat[Running time.]{
\includegraphics[scale=0.75]{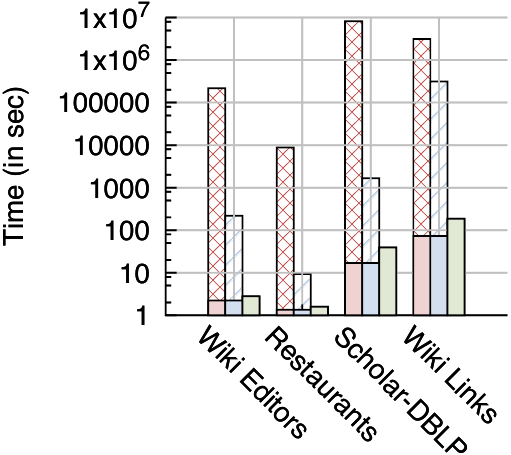}
}
~
\subfloat[Peak memory.]{
\includegraphics[scale=0.75]{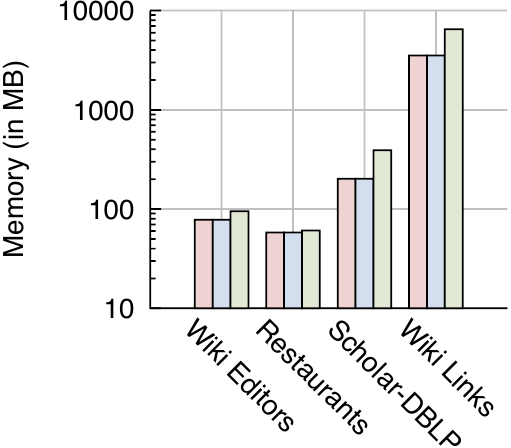}
}
}
\caption{Comparisons of efficiency between threshold-driven and preference-driven approaches. For threshold-driven approach, we tune the threshold such that it achieves the closest (within $0.01$) $F_1$-score as preference-driven approach.
We assume that each pair needs $1s$ to label.
The same preference in Table~\ref{tab:exp/accuracy/classic} is used here for each dataset.}\label{fig:exp/efficiency/human}
\end{figure}

To compare with human labeling approaches, we adopt two supervised approaches to tune a similarity threshold. The first one is random sampling, where pairs are randomly sampled and labeled. We use the same sampling method as \cite{DBLP:conf/sigmod/GokhaleDDNRSZ14}, where the larger set $R$ is sampled with a specified sampling rate, and then joined with $S$ to derive all the pairs to be labeled. Apparently, at least $\abs{S}$ pairs need to be labeled. The second approach~\cite{DBLP:conf/cikm/BuchA15} uses active learning to tune a threshold by incrementally labeling the most uncertain pair to the classifier. When there is a tie, the one with the highest similarity is selected. The threshold that achieves the best on the labeled data is selected. Figure~\ref{fig:exp/accuracy/sample} shows the results.   
For random sampling, only using a very high sampling rate like $10\%$ can almost catch up with our method. For active learning, the number of necessary labels is significantly reduced, however, still hundreds to thousands of pairs need to be labeled on most of the datasets.

The total running time of the two supervised approaches contains two parts: the labeling part which tunes the threshold, and the joining part using the tuned threshold. In comparison, our approach returns both a threshold and its join results in a unified framework. We calculate the end-to-end time that the threshold-driven approach needs in order to achieve the same quality as our preference-driven approach. For simplicity, we assume that each pair takes $1s$ to be labeled correctly by a human, which is a very conservative estimation. 

Figure~\ref{fig:exp/efficiency/human} shows the results. 
The labeling step is much more costly than the joining step. The end-to-end time of the preference-driven approach is orders of magnitude faster. 
The preference-driven approach uses one to two times more memory than a threshold-driven approach, due to the need of caching the candidate pairs.

\subsection{Accuracy of Preference-driven Approach}

In this section, we evaluate the accuracy of our preference-driven approach.
Table~\ref{tab:exp/accuracy} compares the results between $h_c$ and $h_o$. The precision, recall, and $F_1$-score are presented, together with the preferred threshold $\theta^*$. For \textsf{Wiki Editors}, due to its nature of high similarity between misspellings and correct words, both preferences return the same optimal result.
For the record-linkage task on \textsf{Restaurants}, 
$h_o$ gives the significantly better result as it favors one-to-one matching.
For \textsf{Scholar-DBLP}, $h_c$ gives the better result, because maximizing the number of non-trivial connected components actually satisfies the nature of many-to-many matching. For the \textsf{Wiki Links}, $h_c$ gives the optimal result, and $h_o$ is quite close.

\begin{table}[t]
\small
\centering
\caption{Accuracy for varying datasets and preferences.}
\label{tab:exp/accuracy}
\setlength{\tabcolsep}{0.2\tabcolsep}
\begin{tabular}{l|cccc:cccc}
Dataset & \multicolumn{4}{c:}{$h_c$ (\pone)} & \multicolumn{4}{c}{$h_o$ (\ptwo)} \\
\hline
\textsf{Wiki Editors} & $0.625$ & $0.837$ & $0.704$ & $\mathbf{{0.764}}$ & $0.625$ & $0.837$ & $0.704$ & $\mathbf{{0.764}}$ \\
\textsf{Restaurants} & $0.429$ & $0.291$ & $0.938$ & $0.444$ & $0.556$ & $0.805$ & $0.813$ & $\mathbf{0.809}$ \\
\textsf{Scholar-DBLP} & $0.361$ & $0.841$ & $0.683$ & $\mathbf{0.754}$ & $0.419$ & $0.903$ & $0.595$ & $0.717$ \\
\textsf{Wiki Links} & $0.957$ & $0.959$ & $0.657$ & $\mathbf{{0.780}}$ & $0.972$ & $0.968$ & $0.648$ & $0.776$ \\
\hline
& $\theta^*$ & Precision & Recall & $F_1$ & $\theta^*$ & Precision & Recall & $F_1$
\end{tabular}
\setlength{\tabcolsep}{5\tabcolsep}
\end{table}

\subsection{Efficiency of Preference-driven Approach}

There is no existing work solving the same problem. We choose the brute-force method in Section~\ref{sec:framework} as a baseline to evaluate the efficiency. This method takes almost the same time and memory regardless of the preference due to the same amortized time for processing each newly joined pair incrementally. 

Figure~\ref{fig:exp/efficiency}(a) shows the number of thresholds evaluated by the baseline and our algorithm. On most of the datasets, our algorithm evaluates $10$ to $100$ times less thresholds than the baseline. 
Our method achieves a significant speedup due to the combination of other optimization techniques (i.e., incremental similarity join and early termination).  As Figure~\ref{fig:exp/efficiency}(b) shows, our algorithm is $10$ to $100$ times faster than the baseline on all the large datasets.
Figure~\ref{fig:exp/efficiency}(c) shows that our method consumes significantly less memory.

\begin{figure*}[t]
\begin{minipage}{.74\textwidth}
\centering{
\subfloat{
\includegraphics[scale=0.75]{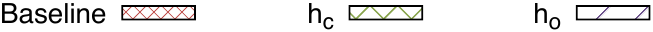}
}
\setcounter{subfigure}{0}
\\
\subfloat[Number of evaluated thresholds.]{
\includegraphics[scale=0.75]{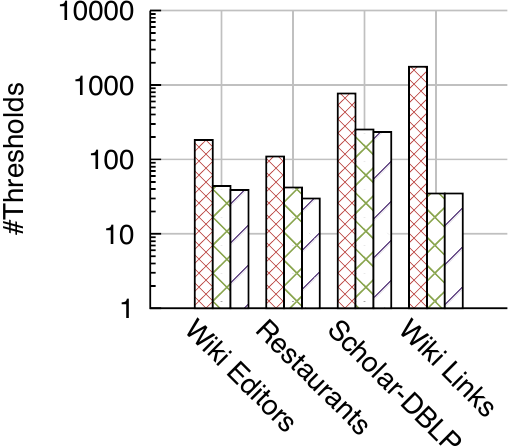}
}
~
\subfloat[Running time.]{
\includegraphics[scale=0.75]{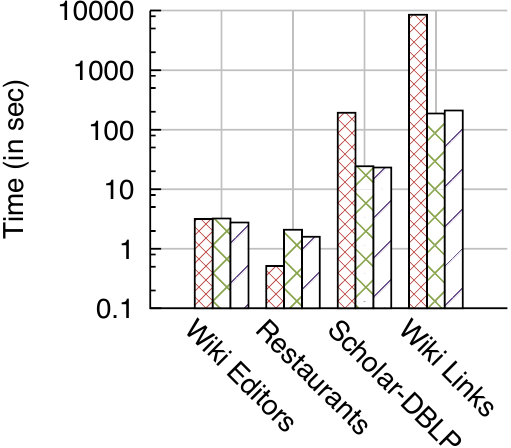}
}
~
\subfloat[Peak memory.]{
\includegraphics[scale=0.75]{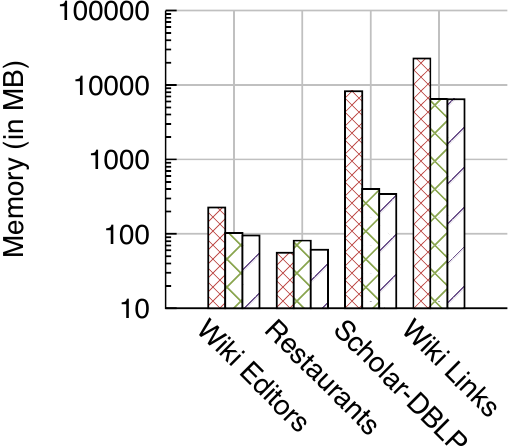}
}
}
\caption{Efficiency of preference-driven approach. }\label{fig:exp/efficiency}
\end{minipage}
\begin{minipage}{.24\textwidth}
\vspace{-.1in}
\subfloat{
\hspace{-.5in}
\includegraphics[scale=0.75]{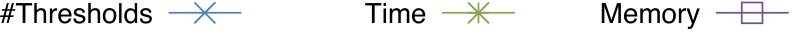}
}
\setcounter{subfigure}{0}
\\
\subfloat{
\includegraphics[scale=0.75]{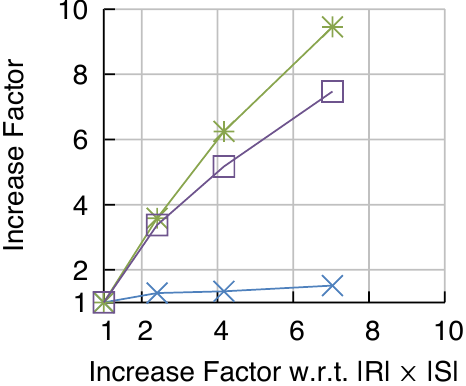}
}
\\
\caption{Scalability on \textsf{Wiki Links} using $h_c$.}\label{fig:exp/scale}
\end{minipage}
\end{figure*}

For incremental similarity join, we use lazy evaluation for speedup.
Instead, a simple approach computes the exact similarities for all the new candidate pairs produced by each threshold, and puts them into a max-heap. Those pairs in the heap whose similarities are no less than the current threshold are popped out for evaluation.
On smaller datasets \textsf{Wiki Editors} and \textsf{Restaurants}, the simple approach works slightly better, as there are not many pairs that need lazy evaluation. However, on larger datasets \textsf{Scholar-DBLP} and \textsf{Wiki Links}, our lazy evaluation approach is $2$ to $5$ times faster.

For scalability, we evaluate our algorithm using $h_c$ on a larger version of \textsf{Wiki Links} that contains 4 files.
There are overlapping objects between different parts.
Figure~\ref{fig:exp/scale}
shows the scalability results on number of evaluated thresholds, running time, and peak memory usage. Our method is scalable.

\section{Related Work}
\label{sec:related}

Due to the crucial role of similarity join in data integration and data cleaning, a large number of similarity join algorithms have been proposed~\cite{DBLP:conf/vldb/ArasuGK06,DBLP:conf/icde/ChaudhuriGK06,DBLP:conf/www/BayardoMS07,DBLP:conf/www/XiaoWLY08,DBLP:journals/pvldb/XiaoWL08,DBLP:conf/icde/XiaoWLS09,DBLP:journals/pvldb/WangLF10,DBLP:journals/pvldb/LiDWF11,DBLP:conf/sigmod/WangLF12,DBLP:journals/pvldb/JiangLFL14}. There are also scalable implementations of the algorithms using the MapReduce framework~\cite{DBLP:conf/sigmod/VernicaCL10,DBLP:journals/pvldb/MetwallyF12,DBLP:conf/icde/DengLHWF14}. Top-k similarity join is also explored~\cite{DBLP:conf/icde/XiaoWLS09,DBLP:journals/kais/XuGPWA16,DBLP:conf/icde/KimS12}.

While the majority of the existing work on similarity join needs to specify a similarity threshold or a limit of the number of results returned, there do exist some studies that seek to find a suitable threshold for similarity join in a supervised way~\cite{DBLP:journals/pvldb/WangLYF11,DBLP:conf/vldb/ChaudhuriCGK07,DBLP:conf/sigmod/ArasuGK10,DBLP:conf/cikm/BuchA15}. 
Both \cite{DBLP:conf/sigmod/ArasuGK10} and~\cite{DBLP:conf/cikm/BuchA15} adopted active learning to tweak the threshold.
Chaudhuri \textit{et al.}~\cite{DBLP:conf/vldb/ChaudhuriCGK07} learned an operator tree, where each node contains a similarity threshold and a similarity function on each of the splitting attributes.
Wang \textit{et al.}~\cite{DBLP:journals/pvldb/WangLYF11} modeled this problem as an optimization problem and applied hill climbing to optimize the threshold-selection process.
Nevertheless, all these methods need humans to label a number of pairs, which are selected based on either random sampling or active learning. In comparison, our method does not need any labeled data.

There are studies about preferences in the database research (See \cite{DBLP:journals/tods/StefanidisKP11} for a thorough survey.). However, they mainly focus on how the joined tuples are ranked and selected, instead of how the tables (and in our case sets of objects) are joined to generate the joined tuples. Alexe~\textit{et al.}~\cite{DBLP:journals/pvldb/AlexeRT14} discussed user-defined preference rules for integrating temporal data, which is orthogonal to our work. To the best of our knowledge, \cite{DBLP:conf/iri/GaoPWC17} is the only work discussing result set preference for joining relational tables.

\section{Conclusions}
\label{sec:conclusion}

Threshold selection, usually neglected in the similarity join literature, can actually be a bottleneck in an end-to-end similarity join process.  To mitigate the challenge, we propose preference-driven similarity join in this paper. We formalize preference-driven similarity join, and propose the notion of result set preference. We present two specific preferences, \pone and \ptwo as proofs of concept. We develop a general framework for preference-driven similarity join and the efficient computation methods. 
We evaluate our approach on four real-world datasets from a diverse range of application scenarios. The results demonstrate that preference-driven similarity join can achieves high-quality results without any labeled data, and our proposed framework is efficient and scalable.   
Future directions include developing more preferences for various application scenarios and supporting various similarity functions.

\end{sloppy}
\end{document}